\journal{\,}
\newtheorem{Lemma}{Lemma}
\def\AA{\mathcal{A}}
\def\HH{\mathcal{H}}
\def\LL{\mathcal{L}}
\def\KK{\mathcal{K}}
\def\RR{\mathcal{R}}
\def\xx{\text{\bf x}}
\def\cc{\text{\bf c}}
\def\yy{\text{\bf y}}
\def\ro{r_{\rm o}}
\begin{document}

\begin{frontmatter}

%% Title, authors and addresses

%% use the tnoteref command within \title for footnotes;
%% use the tnotetext command for theassociated footnote;
%% use the fnref command within \author or \address for footnotes;
%% use the fntext command for theassociated footnote;
%% use the corref command within \author for corresponding author footnotes;
%% use the cortext command for theassociated footnote;
%% use the ead command for the email address,
%% and the form \ead[url] for the home page:
%% \title{Title\tnoteref{label1}}
%% \tnotetext[label1]{}
%% \author{Name\corref{cor1}\fnref{label2}}
%% \ead{email address}
%% \ead[url]{home page}
%% \fntext[label2]{}
%% \cortext[cor1]{}
%% \address{Address\fnref{label3}}
%% \fntext[label3]{}

%% use optional labels to link authors explicitly to addresses:
%% \author[label1,label2]{}
%% \address[label1]{}
%% \address[label2]{}

\title{\textbf{High order surface radiation conditions for time--harmonic waves in exterior domains}}

\author[label]{Sebasti\'{a}n Acosta}
\ead{sebastian.acosta@bcm.edu}
\ead[url]{https://sites.google.com/site/acostasebastian01}

\address[label]{Department of Pediatrics, Baylor College of Medicine and Texas Children's Hospital, TX}

\begin{abstract}
We formulate a new family of high order on--surface radiation conditions to approximate the outgoing solution to the Helmholtz equation in exterior domains. Motivated by the pseudo--differential expansion of the Dirichlet--to--Neumann operator developed by Antoine et al. (J. Math. Anal. Appl. 229:184--211, 1999), we design a systematic procedure to apply pseudo--differential symbols of arbitrarily high order. Numerical results are presented to illustrate the performance of the proposed method for solving both the Dirichlet and the Neumann boundary value problems. Possible improvements and extensions are also discussed.
\end{abstract}

\begin{keyword}
On--surface radiation conditions \sep semi--analytical approximations \sep absorbing boundary conditions \sep wave scattering \sep Helmholtz equation
\MSC[2010] 35J05 \sep 58J40 \sep 58J05 \sep 41A60 \sep 65N38
\end{keyword}

%35J05  	Laplacian operator, reduced wave equation (Helmholtz equation)
%58J40  	Pseudodifferential and Fourier integral operators on manifolds
%58J05  	Elliptic equations on manifolds, general theory
%41A60  	Asymptotic approximations, asymptotic expansions
%65N38  	Boundary element methods

%Highlights:
%- A new family of high order on-surface radiation conditions is formulated.
%- Arbitrarily high order can be obtained through a recursive formula to estimate the DtN map.
%- Numerical examples show a systematic reduction in the error as the order increases.

\end{frontmatter}

%\linenumbers

%%%%%%%%%%%%%%%%%%%%%%%%%%%%%%%%%%%%%%%%%%%%%
%%%%%%%%%%% NEW SECTION %%%%%%%%%%%%%%%%%%%%%
%%%%%%%%%%%%%%%%%%%%%%%%%%%%%%%%%%%%%%%%%%%%%
\section{Introduction} \label{Section.Intro}

This paper is concerned with the approximation of the outgoing solution to the Helmholtz equation in a domain $\Omega$ exterior to a closed surface $\Gamma$. We consider both the Dirichlet and the Neumann boundary value problems,
\begin{subequations} \label{Eqn.Main}
\begin{align} 
& \Delta u + k^2 u = 0 \qquad \text{in $\Omega$}, \label{Eqn.Main1} \\
& u = f_{\rm Dir} \quad \text{or} \quad \partial_{\nu} u = f_{\rm Neu} \qquad \text{on $\Gamma$}, \label{Eqn.Main2} \\
& \lim_{r \to \infty} r \left( \partial_{r}u - ik u \right) = 0,  \label{Eqn.Main3}
\end{align}
\end{subequations}
where $r = |\xx|$ and $\xx \in \mathbb{R}^3$. The main underlying assumptions are that $k > 0$ is constant in $\Omega$, that $\Gamma$ is sufficiently smooth, and that the domain enclosed by $\Gamma$ is strictly convex.
These assumptions imply that the wave field in $\Omega$ is purely outgoing, ruling out the possibility of waves reflecting back and forth between disjoint subsets of $\Gamma$.

When exact analytical solutions to the boundary value problem (\ref{Eqn.Main}) are not available, boundary integral equations and their discretizations offer some of the most robust choices to approximate the solution numerically \cite{Kress-Book-1999,McLean2000,Martin-Book-2006,Chen2010-Book}. Another approach is to truncate and discretize the domain surrounding $\Gamma$ and impose an absorbing condition on an artificial boundary to approximate the outgoing nature of the solution. Some reviews of these techniques are found in \cite{Rabinovich2010,Givoli2004,Tsynkov1998,Ihlenburg1998-Book,Zarmi2013}. A third line of work is the development of \textit{on--surface radiation conditions} (OSRC). These conditions can be seen as the application of absorbing boundary conditions directly on the surface $\Gamma$ in order to approximate the Neumann data when the Dirichlet data is prescribed, or vice versa. Once the Cauchy data $(u, \partial_{\nu}u)$ on $\Gamma$ is suitably approximated, then the Green's representation
\begin{align} \label{Eqn.Green}
u(\xx) = \int_{\Gamma} \left[ u(\yy) \partial_{\nu(\yy)} \Phi(\xx,\yy) - \partial_{\nu} u(\yy) \Phi(\xx,\yy) \right] dS(\yy), \qquad \xx \in \Omega,
\end{align}
provides an approximate solution in $\Omega$. Here $\Phi$ is the fundamental solution to the Helmholtz equation. The OSRC is a semi--analytical approach meant to sacrifice some accuracy in favor of tremendous computational speed. The OSRC may also render ultra fast methods to explore parameter spaces for optimization problems, for the construction of reduced bases, or inexpensive preconditioners to solve boundary integral equations numerically via Krylov iterative techniques \cite{Antoine2005,Antoine2007,Darbas2013}. In general, the OSRC provides a crude approximation of the solution. The state--of--the--art lacks error estimates, and the only notion of convergence is obtained in a microlocal sense. Therefore, the OSRC should not be employed if a high degree of accuracy is needed. 

The original conception of the OSRC is due to Kriegsmann et al. \cite{Kriegsmann1987}. Over the years, improvements and extensions have been devised, including the works of Jones \cite{Jones1988,Jones1990,Jones1992}, Ammari \cite{Ammari1998,Ammari1998b}, Calvo et al. \cite{Calvo2004,Calvo2003}, Antoine, Barucq et al. \cite{Antoine1999,Antoine2008,Antoine2001,Antoine2006,
Barucq2003,Barucq2010a,Barucq2012,Alzubaidi2016} and Darbas et al. \cite{Antoine2004,Antoine2005,Antoine2006,Antoine2007,Darbas2006,Darbas2013,
Darbas2015,Chaillat2015}. See also \cite{Atle2007,Acosta2015f,Stupfel1994,Murch1993,Teymur1996,Yilmaz2007,
Medvinsky2010,Chniti2016,Alzubaidi2016}. An overview of the OSRC theory, with an extensive list of references, was published in \cite{Antoine2008}. One of the open problems discussed by Antoine therein was the lack of a systematic procedure to increase the order of the OSRC beyond second order. Our present work is a novel attempt to partially solve this problem. Other efforts have been carried out in the past. Teymur extended the formulation of Jones, which assumes that the wave field has a well defined phase front conforming to the surface $\Gamma$, to furnish OSRC of high order \cite{Teymur1996}. Explicit expressions were provided up to fourth order. Calvo \cite{Calvo2003,Calvo2004} started from the eigenfunction representation of the Dirichlet--to--Neumann map for a circle and used rational approximations for the logarithmic derivative of the Hankel functions to obtained applicable OSRC of high order. 

As reviewed in Section \ref{Section.PDO}, the approach developed in the present paper is primarily influenced by \cite{Antoine1999} which is based on rigorous microlocal analysis to express the outgoing Dirichlet--to--Neumann (DtN) map in a classical pseudo--differential sense. This pseudo--differential calculus renders a recursive formula to obtain a differential expansion of the DtN map to an arbitrarily high order. In Section  \ref{Section.SphericalCase}, we show that the recursive formula becomes algorithmically tractable for the spherical geometry which we formally generalize to other arbitrary convex surfaces. In Section \ref{Section.Numerics}, we propose a numerical implementation of the high order OSRC based on a triangulation for the surface $\Gamma$. A few numerical results and comparisons with exact solutions are shown in Sections \ref{Section.Results} and \ref{Section.FFP}. Limitations, possible improvements and conclusions are discussed in Section \ref{Section.Conlusion}.

%%%%%%%%%%%%%%%%%%%%%%%%%%%%%%%%%%%%%%%%%%%%%
%%%%%%%%%%% NEW SECTION %%%%%%%%%%%%%%%%%%%%%
%%%%%%%%%%%%%%%%%%%%%%%%%%%%%%%%%%%%%%%%%%%%%
\section{Approximation of the outgoing Dirichlet--to--Neumann map} \label{Section.PDO}

We follow closely the approach described in \cite{Antoine1999}, but we only consider the pseudo--differential analysis with respect to the time variable $t$ and its Fourier dual $k$. We start with the wave equation,
\begin{align}
L u = \Delta u - c^{-2} \partial_{t}^{2} u = 0,
\end{align}
where the space derivatives can be written in terms of a local, tangential system of coordinates on the surface $\Gamma$ to obtain the symbol (with respect to time) of the wave operator,
\begin{align} \label{Eqn.LaplacianLocal}
\LL u = \partial_{r}^2u + 2 \HH  \partial_{r} u + \Delta_{\Gamma} u + k^2 u.
\end{align}
Here $\HH$ is the mean curvature of $\Gamma$, and $\partial_{r}$ represents the derivative in the outward normal direction on $\Gamma$. Multiplication by $k^2$ is an operator of order $+2$. Here $\Delta_{\Gamma}$ represents the Laplace--Beltrami operator on the surface $\Gamma$ which we consider to be of order $0$ with respect to $k$. A concise review of the differential geometry of surfaces, including the definition of curvatures and the Laplace--Beltrami operator, is found in Section 2 of \cite{Antoine1999}. We seek to decompose the wave field propagating across a surface into incoming and outgoing components. This can be done for convex surfaces. Based on Nirenberg's factorization theorem \cite{Nirenberg1973,Antoine1999}, there are two pseudo--differential operators $\Lambda^{-}$ and $\Lambda^{+}$ of order $+1$, such that  
\begin{align} \label{Eqn.Decomp01}
\LL = (\partial_{r} - \Lambda^{-})(\partial_{r} - \Lambda^{+}) u.
\end{align}
We refer to $\Lambda^{+}$ and $\Lambda^{-}$ as the outgoing and incoming Dirichlet--to--Neumann (DtN) operators for the radiating boundary value problem defined in the exterior of $\Gamma$. The operators $\Lambda^{\pm}$ have respective symbols $\lambda^{\pm}$ that admit the following expansion
\begin{align} \label{Eqn.SymbolExpansion01}
\lambda^{\pm} \sim \sum_{n = -1}^{+\infty} \lambda^{\pm}_{-n} 
\end{align}
where $\lambda^{\pm}_{-n}$ are symbols of degree $-n$ with respect to $k$. The expansion is understood in the sense of classical pseudo--differential operators. See \cite{Antoine1999,Chazarain1982-Book,Taylor2000-Book,Taylor1996-ChapterPSO} for details. We develop (\ref{Eqn.Decomp01}) to obtain
\begin{align} \label{Eqn.Decomp02}
\LL u = \partial_{r}^2 u - \left( \Lambda^{+} + \Lambda^{-} \right) \partial_{r} u  + \left( \Lambda^{-}\Lambda^{+} - \{ \partial_{r} \lambda^{+} \} \right) u.
\end{align}
By comparing (\ref{Eqn.LaplacianLocal}) and (\ref{Eqn.Decomp02}), and matching terms with same degree of radial derivatives, we find that,
\begin{align} 
& \Lambda^{+} + \Lambda^{-} = - 2 \HH , \label{Eqn.PseudoDiffSystem01} \\
& \Lambda^{-}\Lambda^{+} - \{ \partial_{r} \lambda^{+} \} = k^2 + \Delta_{\Gamma} . \label{Eqn.PseudoDiffSystem02}
\end{align}
Now we plug the expansion (\ref{Eqn.SymbolExpansion01}) to write the system (\ref{Eqn.PseudoDiffSystem01})--(\ref{Eqn.PseudoDiffSystem02}) in terms of pseudo--differential symbols. We obtain,
\begin{align} 
&\sum_{n=-1}^{+ \infty} \left( \lambda_{-n}^{+} + \lambda_{-n}^{-} \right) = - 2 \HH , \label{Eqn.SymbolSystem01} \\ 
& \sum_{n=-2}^{+ \infty} \sum_{j=-1}^{n+1}  \lambda^{-}_{-j} \lambda^{+}_{j-n}  - \sum_{m=-1}^{+ \infty}  \partial_{r} \lambda_{-m}^{+} = k^2 + \Delta_{\Gamma} . \label{Eqn.SymbolSystem02}
\end{align}

Now in (\ref{Eqn.SymbolSystem01}) we match symbols of the same order with respect to $k$ to find that
\begin{align} 
& \lambda_{-n}^{+} + \lambda_{-n}^{-} = 0 \qquad n \neq 0, \\
& \lambda_{0}^{+} + \lambda_{0}^{-} = - 2 \HH.
\end{align}
Similarly, matching the symbols of order $+2$ and $+1$ in (\ref{Eqn.SymbolSystem02}), we find that
\begin{align*} 
& \lambda_{1}^{-} \lambda_{1}^{+} = k^2, \\
& \lambda_{1}^{-}\lambda_{0}^{+} + \lambda_{0}^{-}\lambda_{1}^{+} - \partial_{r} \lambda_{1}^{+}  = 0.
\end{align*}
Since $\lambda_{1}^{-} = - \lambda_{1}^{+}$ and $\lambda_{0}^{-} = - 2 \HH - \lambda_{0}^{+}$ then we arrive at
\begin{align} 
& \lambda_{1}^{\pm} = \pm i k, \label{Eqn.Lp1}  \\
& \lambda_{0}^{\pm} = - \HH. \label{Eqn.L00}
\end{align}
Now we proceed to match the symbols of order $0$ from (\ref{Eqn.SymbolSystem02}) to obtain
\begin{align*} 
\lambda_{1}^{-} \lambda_{-1}^{+} +  \lambda_{0}^{-} \lambda_{0}^{+} + \lambda_{-1}^{-} \lambda_{1}^{+} - \partial_{r} \lambda_{0}^{+} = \Delta_{\Gamma}
\end{align*}
Using (\ref{Eqn.Lp1})--(\ref{Eqn.L00}) and the fact that $\partial_{r}\HH = \KK - 2 \HH^2$ (see  \cite[Probl. 11 \S 3.5]{DoCarmo1976}) we arrive at
\begin{align} 
\lambda_{-1}^{\pm} = \pm \frac{i}{2k} \left( \Delta_{\Gamma} - \HH^2 - \partial_{r} \HH \right) =  \pm \frac{i}{2k} \left( \Delta_{\Gamma} + \HH^2 - \KK \right) \label{Eqn.Lm1} 
\end{align}
where $\KK$ is the Gauss curvature of $\Gamma$. Using (\ref{Eqn.SymbolSystem02}) we leave the rest of the symbols expressed recursively as follows,
\begin{align} 
\lambda_{-n-1}^{+} = \frac{i}{2k} \left[ \partial_{r} \lambda_{-n}^{+} + \sum_{j=1}^{n-1} \lambda_{-j}^{+} \lambda_{j-n}^{+}  \right], \qquad n \geq 1.   \label{Eqn.LmRest} 
\end{align}

%%%%%%%%%%%%%%%%%%%%%%%%%%%%%%%%%%%%%%%%%%%%%
%%%%%%%%%%% NEW SECTION %%%%%%%%%%%%%%%%%%%%%
%%%%%%%%%%%%%%%%%%%%%%%%%%%%%%%%%%%%%%%%%%%%%
\section{Formal Generalization from Spherical Case}
\label{Section.SphericalCase}

The only impediment to algorithmically apply (\ref{Eqn.LmRest}) is the lack of knowledge of the dependence of $\lambda_{-n}^{+} = \lambda_{-n}^{+}(r)$ in order to evaluate $\partial_{r} \lambda_{-n}^{+}$. Here we seek to provide an explicit expression for the symbols $\lambda^{+}_{-n}$ as a function of $r$ for the case where $\Gamma$ is a sphere of radius $\ro$. 
We take a family of parallel spherical surfaces $\Gamma(r)$ parametrized by their radius $r$. Our practical application of the recursive formula (\ref{Eqn.LmRest}) is based on the following lemma.

\begin{Lemma} \label{Lemma.01}
For the family of spherical surfaces $\Gamma(r)$ of radius $r$, the symbols $\lambda_{-n}^{+}$ have the following dependence on the parameter $r$,
\begin{align} 
\lambda_{-n}^{+}(r) = \left(\frac{\ro}{r}\right)^{n+1} \lambda_{-n}^{+}(\ro) \qquad n=-1,0,1,2,...
\end{align}
where $\lambda_{-n}^{+}(\ro)$ is independent of $r$.
\end{Lemma}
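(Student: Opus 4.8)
The plan is to proceed by induction on $n$, exploiting that the recursion (\ref{Eqn.LmRest}) preserves a simple homogeneity in the radial variable. Identifying every parallel sphere $\Gamma(r)$ with the unit sphere by radial projection, one has $\Delta_\Gamma = r^{-2}\Delta_{S^2}$ with $\Delta_{S^2}$ the fixed Laplace--Beltrami operator on the unit sphere; in particular $\partial_r$ commutes with $\Delta_{S^2}$ and therefore acts only on explicit powers of $r$. The assertion is then equivalent to $\lambda_{-n}^+(r) = c_{-n}\,r^{-(n+1)}$ for an $r$-independent factor $c_{-n}$ (a polynomial in $\Delta_{S^2}$ and $k^{-1}$), and setting $c_{-n} = \ro^{\,n+1}\lambda_{-n}^+(\ro)$ recovers the stated form.

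I would first dispatch the base cases directly from the explicit symbols. On a sphere of radius $r$ we have $\HH = 1/r$ and $\KK = 1/r^2$, so (\ref{Eqn.Lp1}) gives $\lambda_1^+ = ik$, independent of $r$ and matching the exponent $n+1=0$; (\ref{Eqn.L00}) gives $\lambda_0^+ = -1/r$, matching exponent $1$; and in (\ref{Eqn.Lm1}) the curvature terms cancel because $\HH^2 - \KK = 0$, leaving $\lambda_{-1}^+ = \tfrac{i}{2k}\,r^{-2}\Delta_{S^2}$, proportional to $r^{-2}$ as required for $n=1$.

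For the inductive step I would assume the power law for all subscripts $-1,-2,\dots,-n$ and substitute into (\ref{Eqn.LmRest}) to produce subscript $-(n+1)$. The radial--derivative term gives $\partial_r\big(c_{-n} r^{-(n+1)}\big) = -(n+1)\,c_{-n}\,r^{-(n+2)}$. The key step is the product sum: writing $\lambda_{-j}^+ \propto r^{-(j+1)}$ and $\lambda_{j-n}^+ = \lambda_{-(n-j)}^+ \propto r^{-(n-j+1)}$, each product carries the power $r^{-(j+1)-(n-j+1)} = r^{-(n+2)}$, with the $j$--dependence cancelling in the exponent. Every term on the right of (\ref{Eqn.LmRest}) is therefore proportional to the single power $r^{-(n+2)}$, whence $\lambda_{-n-1}^+ \propto r^{-(n+2)} = r^{-((n+1)+1)}$, which is precisely the claim with $n$ advanced to $n+1$.

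Two points deserve care. The index bookkeeping: for $1 \le j \le n-1$ the subscripts $-j$ and $j-n$ both fall in $\{-(n-1),\dots,-1\}$, and the derivative term needs only $\lambda_{-n}^+$, so every symbol invoked lies within the inductive range (note that $\lambda_0^+$ and $\lambda_1^+$ never enter the recursion and serve only as separate base cases). The only genuinely structural feature is the cancellation of $j$ in the product exponent, which makes the homogeneity self--reproducing under the recursion; I expect this exponent--matching to be the crux, while the remainder is routine power counting. Finally, the claimed $r$--independence of $c_{-n}$ rests on $\partial_r$ not acting on $\Delta_{S^2}$, which the unit--sphere identification guarantees.
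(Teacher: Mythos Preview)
Your proof is correct and follows essentially the same inductive argument as the paper: verify the base cases $n=-1,0,1$ from the explicit symbols (\ref{Eqn.Lp1})--(\ref{Eqn.Lm1}) using $\HH=1/r$, $\KK=1/r^2$, $\Delta_{\Gamma(r)} = r^{-2}\Delta_{S^2}$, and then observe that the recursion (\ref{Eqn.LmRest}) is homogeneous of degree $-(n+2)$ in $r$. Your treatment is in fact a bit more careful than the paper's in tracking the index range of the sum and in noting that $\lambda_0^+,\lambda_1^+$ never reappear in the recursion, but the strategy is identical.
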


\begin{proof}
The claim is clearly true for $n=-1$ even if $\Gamma$ is not a sphere. Now for the spherical case, the mean and Gauss curvatures have the following properties,
\begin{align*} 
\HH(r)  = \frac{\ro}{r} \HH(\ro) \qquad \text{and} \qquad \KK(r) = \left( \frac{\ro}{r} \right)^{2} \KK(\ro)
\end{align*}
where $\HH(\ro) = 1/\ro$ (mean curvature) and $\KK(\ro) = 1/\ro^2$ (Gauss curvature) for a sphere. Therefore, the claim is true for $n=0$. For a sphere, the Laplace--Beltrami operator satisfies $\Delta_{\Gamma(r)} = (\frac{\ro}{r})^2 \Delta_{\Gamma}$. Hence, the claim is also valid for $n=1$. Now we proceed by induction using the recursive formula (\ref{Eqn.LmRest}). Directly by inductive hypothesis, we have
\begin{align*} 
\lambda_{-(n+1)}^{+}(r) = \frac{i}{2k}  \left(\frac{\ro}{r}\right)^{n+2} \left[  - \frac{(n+1)}{\ro}  \lambda_{-n}^{+}(\ro) + \sum_{j=1}^{n-1} \lambda_{-j}^{+}(\ro) \lambda_{j-n}^{+}(\ro)  \right], \qquad n=1,2,...
\end{align*}

\end{proof}

Using Lemma \ref{Lemma.01}, we can express $\partial_{r} \lambda_{-n}^{+}$ explicitly in order to apply DtN operator $\lambda^{+}$ using the series (\ref{Eqn.SymbolExpansion01}). We obtain the following representation for the symbols on the surface $\Gamma$,
\begin{subequations} \label{Eqn.DtN0}
\begin{align} 
\lambda_{1}^{+} &= ik \label{Eqn.DtN0a} \\
\lambda_{0}^{+} &= - \HH \label{Eqn.DtN0b} \\
\lambda_{-1}^{+} &= \frac{i}{2k} \left( \Delta_{\Gamma} + \HH^2 - \KK  \right) \label{Eqn.DtN0c} \\
\lambda_{-(n+1)}^{+} &= \frac{i}{2k} \left[ -  (n+1) \HH \,  \lambda_{-n}^{+} + \sum_{j=1}^{n-1} \lambda_{-j}^{+} \lambda_{j-n}^{+}  \right], \qquad n=1,2,... \label{Eqn.DtN0d}
\end{align}
\end{subequations}
where we have identified $r = \HH^{-1}$. Although the recursive formula (\ref{Eqn.DtN0d}) is only valid for the spherical case, we formally generalize it to other geometries. In order to conform to the literature, we will use the following terminology in the sequel. Our approximation $[\lambda^{+}]_{N}$, of order $N$ with respect to $k$, for the symbol $\lambda^{+}$ of the outgoing DtN operator $\Lambda^{+}$ is given by
\begin{align} \label{Eqn.DtNorderN}
[\lambda^{+}]_{N} = \sum_{n=-1}^{N-1} \lambda_{-n}^{+} \qquad N \geq 0, 
\end{align}
where $\lambda_{-n}^{+}$ are defined by (\ref{Eqn.DtN0}). At least for the spherical case, we get $| \lambda^{+} - [\lambda^{+}]_{N} | \sim \mathcal{O}(k^{-N})$ as $k \to \infty$, which reveals the order of the approximation in a microlocal sense. Notice that $[\lambda^{+}]_{N}$ is not expected to converge as $N \to \infty$ for fixed $k$. However, as illustrated in Section \ref{Section.Results}, we do see more accurate results as $N$ increases within a reasonable range. We refer to $[\lambda^{+}]_{N}$ as the \textit{Dirichet--to--Neumann On--Surface Radiation Condition} of order $N$ (DtN--OSRC$_{N}$). Analogously, we refer to $[\lambda^{+}]_{N}^{-1}$ as the \textit{Neumann--to--Dirichlet On--Surface Radiation Condition} of order $N$ (NtD--OSRC$_{N}$). 

In order to easily compare with other results in the literature of OSRC, we display explicitly the DtN-OSRC$_{N}$ up to the fourth order, that is, for $N=0,1,2,3,4$:
\begin{subequations} \label{Eqn.DtNupto4}
\begin{align} 
[\lambda^{+}]_{0} &= ik \label{Eqn.DtN_0} \\
[\lambda^{+}]_{1} &= ik - \HH \label{Eqn.DtN_1} \\
[\lambda^{+}]_{2} &= ik - \HH + \frac{i}{2k} \left( \Delta_{\Gamma} + \HH^2 - \KK  \right) \label{Eqn.DtN_2} \\
[\lambda^{+}]_{3} &= ik - \HH + \left( \frac{i}{2k} + \frac{\HH}{2 k^2} \right) \left( \Delta_{\Gamma} + \HH^2 - \KK  \right)  \label{Eqn.DtN_3} \\
[\lambda^{+}]_{4} &= ik - \HH + \left( \frac{i}{2k} + \frac{\HH}{2 k^2} - \frac{i}{8 k^3} \left( \Delta_{\Gamma} + 7 \HH^2 - \KK  \right) \right) \left( \Delta_{\Gamma} + \HH^2 - \KK  \right)   \label{Eqn.DtN_4}
\end{align}
\end{subequations}

%%%%%%%%%%%%%%%%%%%%%%%%%%%%%%%%%%%%%%%%%%%%%
%%%%%%%%%%% NEW SECTION %%%%%%%%%%%%%%%%%%%%%
%%%%%%%%%%%%%%%%%%%%%%%%%%%%%%%%%%%%%%%%%%%%%
\section{Discrete Implementation} \label{Section.Numerics}

In this section, we describe a numerical implementation of the proposed on--surface radiation conditions defined by (\ref{Eqn.DtNorderN}). We numerically solve both the Dirichlet and the Neumann boundary value problems (\ref{Eqn.Main}), using a discrete version of the DtN--OSRC$_{N}$ and of NtD--OSRC$_{N}$, respectively. The approach is based on having a triangulation that approximates the surface $\Gamma$. In the following subsections, we describe the discrete approximations of the Laplace--Beltrami operator, of the mean and Gauss curvatures of $\Gamma$, and of the OSRC.

%%%%%%%%%%% NEW SECTION %%%%%%%%%%%%%%%%%%%%%
\subsection{Discrete Laplace--Beltrami operator and curvatures} \label{Subsection.LBCurv}

The subject of approximating geometrical properties and operators on triangulated surfaces has received considerable attention in the past decades. See  \cite{Petronetto2013,Reuter2009,Xu2006a,Xu2004,Xu2004a,Meyer2003,
Surazhsky2003,Borrelli2003,Magid2007,Bobenko2007,Rusinkiewicz2004,Grinspun2006,Sun2016-Book} for some recent advances in this area which we use as our guiding references.

The \emph{discrete Laplace--Beltrami operator} is described as follows. Let $\{\xx_{j}\}_{j=1}^{J}$ be the collection of vertices of the triangulation $\mathcal{T}$ of the surface $\Gamma$. For a fixed vertex $\xx_{j}$, let $\{\xx_{j(i)}\}_{i=1,2,...}$ be the neighboring vertices of $\xx_{j}$, and let
$\{e_{i}\}_{i=1,2,...}$ be the edges of the triangulation that connect $\xx_{j}$ and $\xx_{j(i)}$. Now for each edge $e_{i}$, let $\alpha_{i}$ and $\beta_{i}$ be angles at the vertex $\xx_{j}$ of the triangles on either side of the edge $e_{i}$. For a smooth function $u$ defined on the surface $\Gamma$, we use the following discrete Laplace--Beltrami operator,
\begin{align} \label{Eqn.DiscreteLB}
(\Delta_{\Gamma} u)(\xx_{j}) = \frac{2}{\pi^2} \sum_{i} \frac{\left( \sin \alpha_{i}  + \sin \beta_{i} \right)^2}{{\left( A_{j} A_{j(i)} \right)^{1/2}}} \left( u(\xx_{j(i)}) - u(\xx_{j}) \right),
\end{align}
where $A_{j}$ is the area associated with the vertex $\xx_{j}$ defined as $1/3$ of the total area of the triangular elements sharing the vertex $\xx_{j}$. 

For the \emph{discrete mean curvature}, we first define the discrete mean curvature vector as
\begin{align} \label{Eqn.DiscreteMCV}
\textbf{H} (\xx_{j}) = - \frac{1}{\pi^2} \sum_{i} \frac{\left( \sin \alpha_{i}  + \sin \beta_{i} \right)^2}{{\left( A_{j} A_{j(i)} \right)^{1/2}}} \left( \xx_{j(i)} - \xx_{j} \right).
\end{align}
Then the mean curvature is given by
\begin{align} \label{Eqn.DiscreteMC}
\HH (\xx_{j}) = \textbf{H} (\xx_{j}) \cdot \textbf{n}(\xx_{j})
\end{align}
where $\textbf{n}(\xx_{j})$ is the normal vector of the vertex $\xx_{j}$.

For the \emph{discrete Gauss curvature} at each vertex $\xx_{j}$ of the triangulation, we follow the approach described in \cite{Surazhsky2003,Magid2007,Xu2006a} based on the Gauss--Bonnet theorem. Let $\alpha_{i}$ be the angle between two successive edges sharing the vertex $\xx_{j}$. Then, the approximated Gauss curvature is given by
\begin{align} \label{Eqn.DiscreteGaussCurv}
\KK(\xx_{j}) = \frac{2 \pi - \sum_{i} \alpha_{i}}{A_{j}}
\end{align}
where $A_{j}$ is the area associated with the vertex $\xx_{j}$ as defined above.

%%%%%%%%%%% NEW SECTION %%%%%%%%%%%%%%%%%%%%%
\subsection{Discrete on--surface radiation conditions} \label{Subsection.DtN}

Given the triangulation $\mathcal{T}$ of the surface $\Gamma$, we have all the necessary definitions in Subsection \ref{Subsection.LBCurv} to implement a discrete version of the formulas (\ref{Eqn.DtN0}) to apply the discrete DtN--OSRC$_{N}$ given by (\ref{Eqn.DtNorderN}). On the triangulation $\mathcal{T}$, the operator $[\lambda^{+}]_{N}$ can be represented by a matrix of size $M \times M$, where $M$ is the number of vertices of the triangulation $\mathcal{T}$. This is equivalent to working on a piecewise linear function space supported by a mesh of triangular elements. We are not concerned with increasing the polynomial order of approximation because the error from the OSRC is known to dominate the calculations once the mesh is fine enough to resolve the wavelength $2 \pi / k$.

The recursive formula (\ref{Eqn.DtN0d}) and the fact that (\ref{Eqn.DtN0c}) contains the Laplace--Beltrami operator $\Delta_{\Gamma}$ imply that the symbols $\lambda_{-n}^{+}$ apply differential operators of increasingly higher order as $n$ increases. A direct application of the operators $\lambda_{-n}^{+}$ quickly becomes ill--conditioned (unstable) at the numerical level. Therefore, we propose a regularized application of the discrete DtN--OSRC$_{N}$ and NtD--OSRC$_{N}$ operators to obtain $u \approx \lambda^{+} f_{\rm Dir}$ or $v \approx (\lambda^{+})^{-1} f_{\rm Neu}$. This application is defined in precise term by the Algorithm \ref{Alg.DtN}.

\begin{algorithm}[H]
\begin{algorithmic}
\State {\tt Initialize}: $[\lambda^{+}] = \lambda_{1}^{+}$, {\tt and} $u = [\lambda^{+}] f_{\rm Dir}$ {\tt or} $v = [\lambda^{+}]^{-1} f_{\rm Neu}$.
\State {\tt for} $n = 0, 1, 2, \ldots,  N-1$
\State \hspace{2em} $ [\lambda^{+}] \leftarrow [\lambda^{+}] + \lambda_{-n}^{+}$
\State \hspace{2em} $u \leftarrow u + \RR_{n} \left( [\lambda^{+}]f_{\rm Dir} - u \right) \quad $ {\tt or} $\quad v \leftarrow v + \RR_{n} \left( [\lambda^{+}]^{-1}f_{\rm Neu} - v \right) $
\State {\tt end}
\State {\tt Return}: $u$ {\tt or} $v$.
\end{algorithmic}
\caption{: Regularized application of the $N^{\rm th}$ order approximate DtN--OSRC$_{N}$ operator $[\lambda^{+}]_{N}$ and NtD--OSRC$_{N}$ operator $[\lambda^{+}]_{N}^{-1}$. Here $\RR_{n}$ is a regularizer and $\lambda_{-n}^{+}$ is given by (\ref{Eqn.DtN0}).}
\label{Alg.DtN}
\end{algorithm}
In this paper, we use the following regularizer,
\begin{align} \label{Eqn.Regularizer}
\RR_{n} = \AA^{n}, \quad n \geq 0, \quad \text{where} \quad (\AA u)(\xx_{j}) = C_{j} \sum_{i} \frac{\left( \sin \alpha_{i}  + \sin \beta_{i} \right)^2}{{\left( A_{j} A_{j(i)} \right)^{1/2}}} u(\xx_{j(i)})
\end{align}
where $\xx_{j(i)}$, $\alpha_{i}$, $\beta_{j}$, $A_{i}$ and $A_{j(i)}$ are defined in Subsection \ref{Subsection.LBCurv}. The constant $C_{j}$ normalizes $\AA$ so that the entries in each row sum to $1$. In other words, $\AA$ is a local averaging operator mimicking the off--diagonal entries of the Laplace--Beltrami operator (\ref{Eqn.DiscreteLB}). Other approaches, such as global smoothing operators, may work too. For instance, good results are obtained by using,
\begin{align} \label{Eqn.Regularizer2}
\RR_{n} = \left( I - c_{n} \Delta_{\Gamma} \right)^{-1}
\end{align}
where $c_{n} \sim \left(n/k \right)^2$. In practice, the application of (\ref{Eqn.Regularizer2}) requires the inversion of an $M \times M$ matrix representing the discretization of a coercive operator on $\Gamma$. In this paper we show results using the application of (\ref{Eqn.Regularizer}), which is computationally lighter than (\ref{Eqn.Regularizer2}). The regularizers (\ref{Eqn.Regularizer}) and (\ref{Eqn.Regularizer2}) damp out the spurious oscillations that may appear due to non--smooth imperfections in approximating the surface $\Gamma$, the curvatures $\HH$ and $\KK$, as well as the operator $\Delta_{\Gamma}$. It turns out that approximation of geometrical properties on triangulations is slowly convergent and quite sensitive to the valence of vertices and to the quality of triangles \cite{Xu2006a,Xu2004a,Xu2004,Borrelli2003,Magid2007,Bobenko2007}.

%%%%%%%%%%%%%%%%%%%%%%%%%%%%%%%%%%%%%%%%%%%%%
%%%%%%%%%%% NEW SECTION %%%%%%%%%%%%%%%%%%%%%
%%%%%%%%%%%%%%%%%%%%%%%%%%%%%%%%%%%%%%%%%%%%%
\section{Numerical Results} \label{Section.Results}

In this section we present a few numerical results obtained from the implementation of the proposed high order OSRC described in Section \ref{Section.Numerics}. We discuss three realizations for the surface $\Gamma$: the unit sphere, a marshmallow--like surface and a couple of more challenging non--convex surfaces in the likeness of a squash and a red blood cell. These four shapes are shown in Figure \ref{Fig:Meshes}. The approximate mean curvature, defined by (\ref{Eqn.DiscreteMC}), is shown in Figure \ref{Fig:MeanCurv}. Although the pseudo--differential approximation of the DtN operator described in Section \ref{Section.PDO} is valid for convex surfaces, we still apply it to non--convex surfaces to explore its performance.

\begin{figure}[H]
\centering
\subfloat[Sphere]{\includegraphics[height=0.35\textwidth]{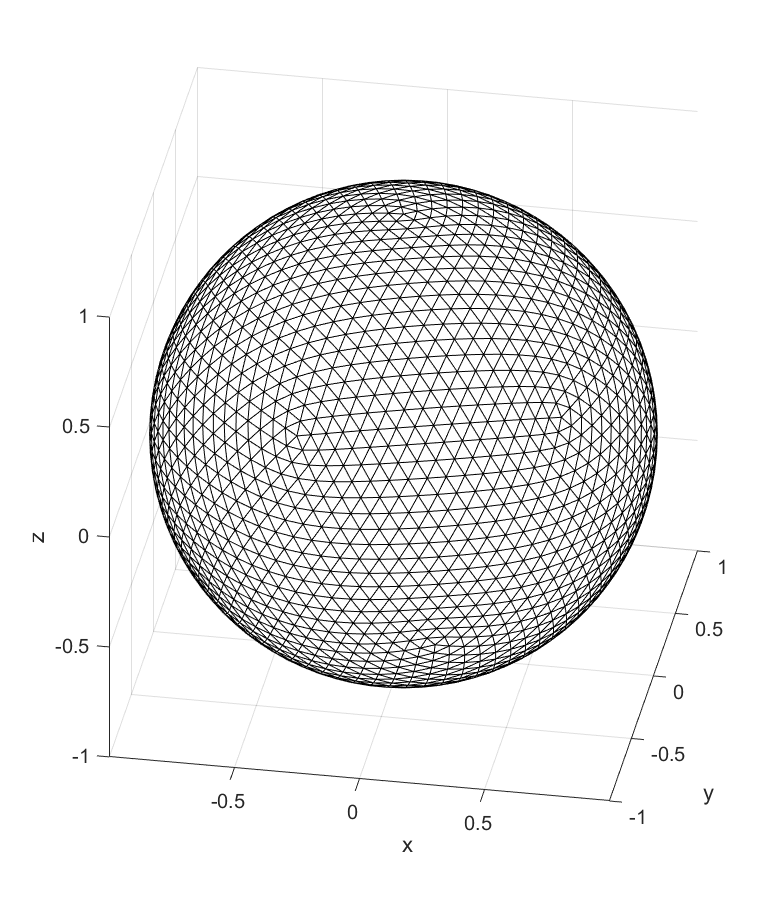}}
\subfloat[Marshmallow]{\includegraphics[height=0.35\textwidth]{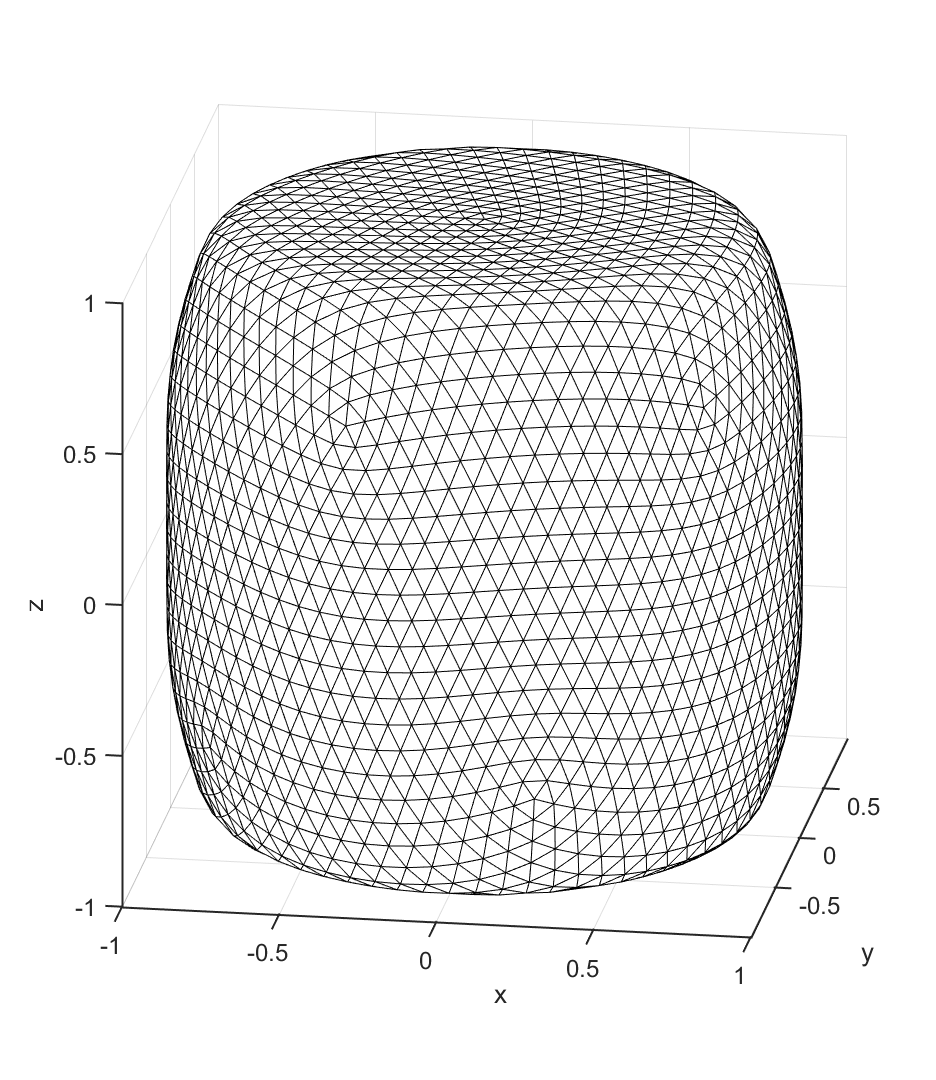}} \\
\subfloat[Squash]{\includegraphics[height=0.35\textwidth]{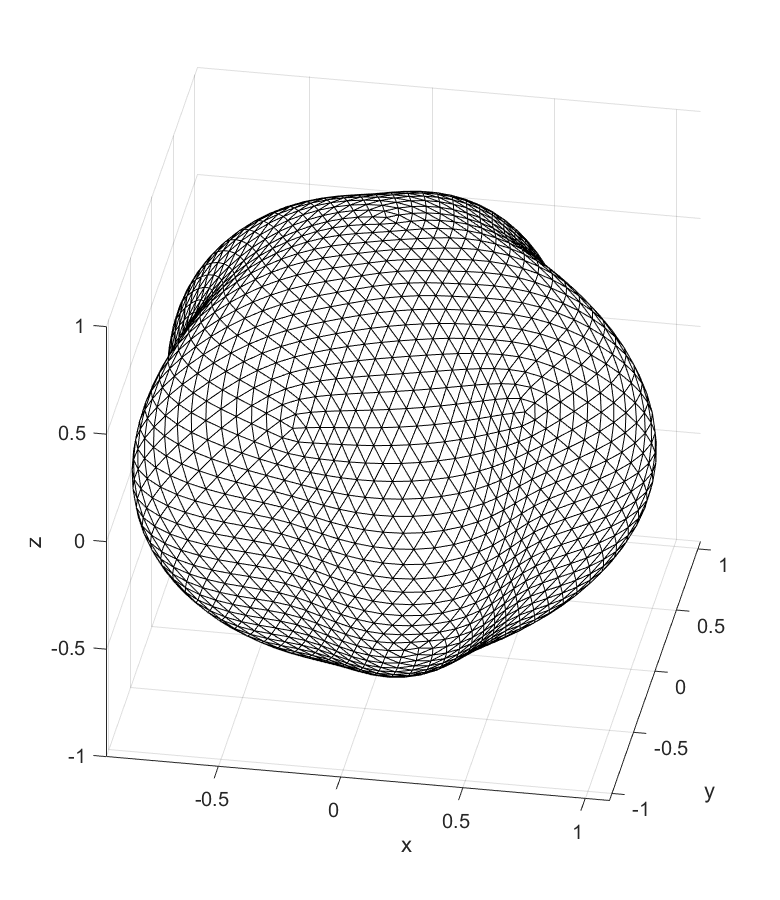} }
\subfloat[Blood cell]{\includegraphics[height=0.35\textwidth]{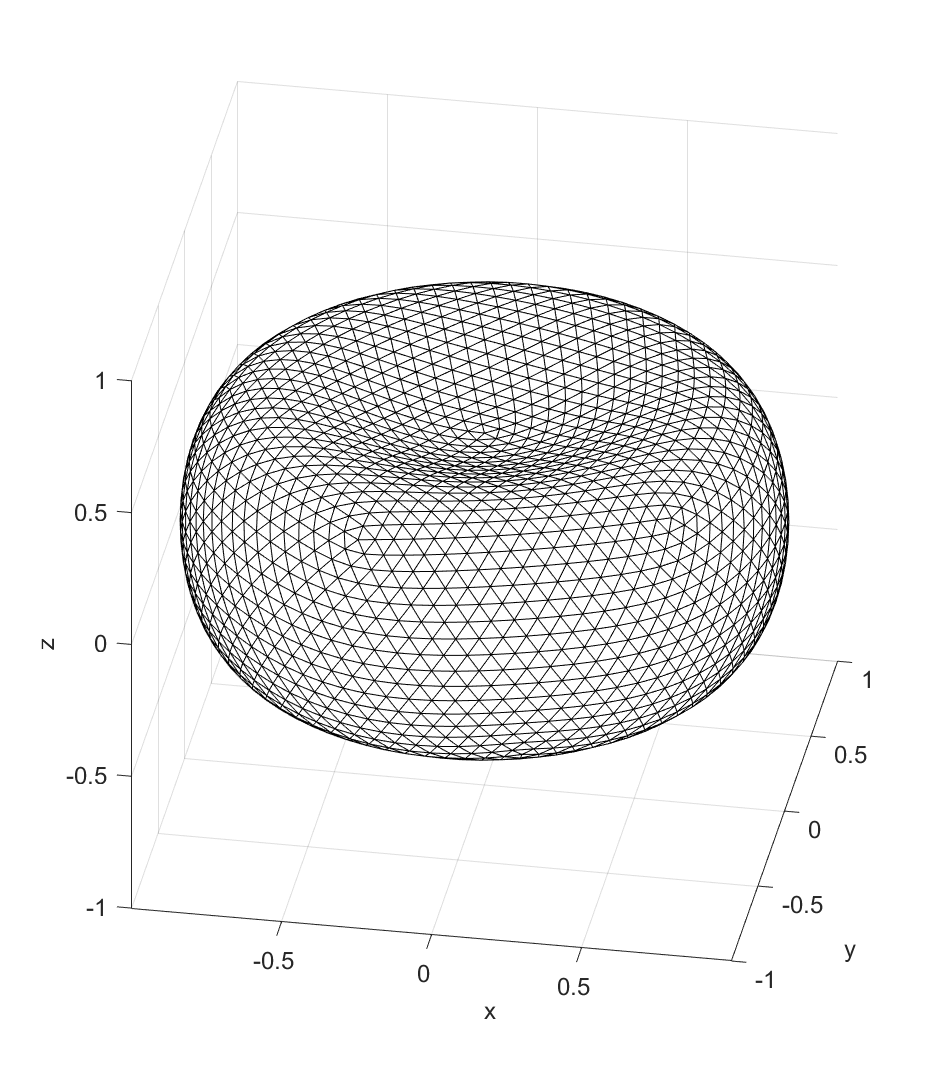} }
\caption{Coarse meshes for a sphere, a marshmallow, a squash and a blood cell. These meshes contain 2562 vertices and 5120 triangles.}  
\label{Fig:Meshes}
\end{figure}

\begin{figure}[H]
\centering
\subfloat[Marshmallow]{\includegraphics[height=0.25\textwidth]{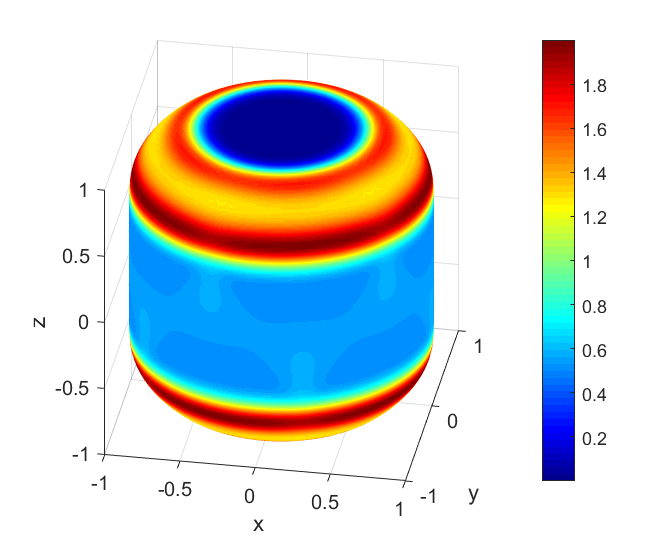}}
\subfloat[Squash]{\includegraphics[height=0.25\textwidth]{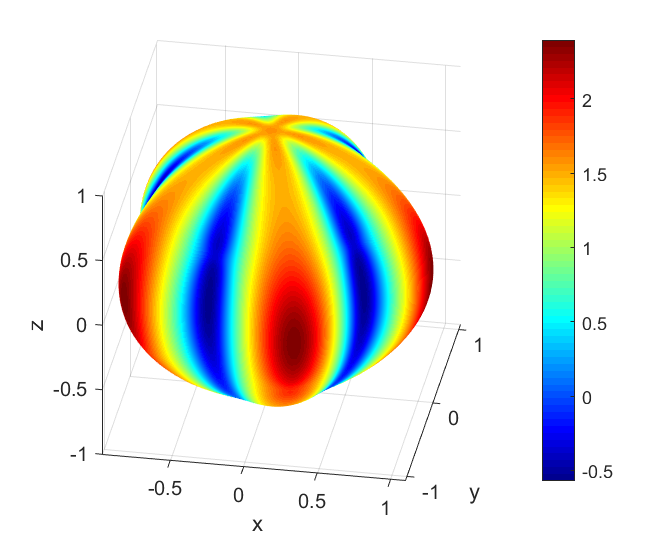}}
\subfloat[Blood cell]{\includegraphics[height=0.25\textwidth]{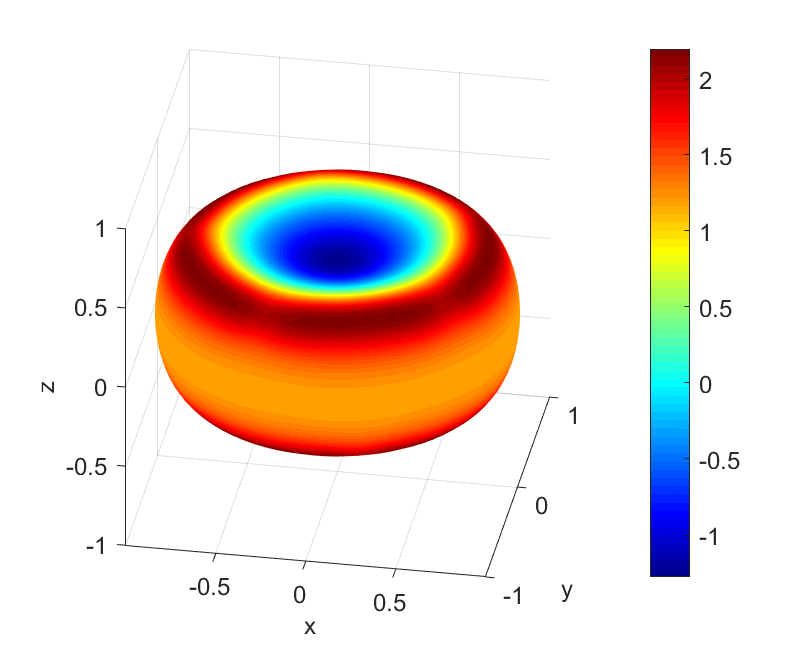}}
\caption{Mean curvature on the marshmallow, the squash and the blood cell. Notice that the squash and blood cell are not convex surfaces, and they have regions of negative mean curvature.}  
\label{Fig:MeanCurv}
\end{figure}

For comparison, we want to have an exact solution to the boundary values problems (\ref{Eqn.Main}), regardless of the geometry of the surface $\Gamma$. Hence, we define very specific boundary data in order to engineer an exact solution. First consider the field,
\begin{align} \label{Eqn.InternalSource}
F(\xx) = \sum_{l=1}^{L} \Phi(\xx - \cc_{l})
\end{align}
where $\Phi(\xx) = e^{ik|\xx|}/(4\pi|\xx|)$ is the outgoing fundamental solution to the Helmholtz equation. Hence, $F$ represents the superposition of $L$ point--sources with respective locations at $\cc_{l}$. If the points $\cc_{l}$ for $l=1,2,...,L$ are located inside the surface $\Gamma$, then the exact solution to the boundary values problems (\ref{Eqn.Main}) with Dirichlet data $f_{\rm Dir} = F|_{\Gamma}$ or Neumann data $f_{\rm Neu} = \partial_{\nu} F$ is given by $F|_{\Omega}$. Figure \ref{Fig:Error} displays the $L^2$--norm relative errors for both DtN--OSRC$_{N}$ and NtD--OSRC$_{N}$ as $N$ increases, for $k = 10 \pi$, and using (\ref{Eqn.Regularizer}) as the regularizer. In order to create an interesting (non--trivial) problem, we used $L=5$ point sources to define (\ref{Eqn.InternalSource}). These 5 point sources are located on the corners of a regular pentagon of circumradius equal to $1/2$ and lying on the $\{z=0\}$ plane.  The errors are defined as follows,
\begin{align} \label{Eqn.Error}
\epsilon_{\rm DtN} = \frac{\| f_{\rm Neu} - [\lambda^{+}]_{N} f_{\rm Dir} \|_{L^{2}(\Gamma)}}{\| f_{\rm Neu} \|_{L^{2}(\Gamma)}} \quad \text{and} \quad \epsilon_{\rm NtD} = \frac{\| f_{\rm Dir} - [\lambda^{+}]_{N}^{-1} f_{\rm Neu} \|_{L^{2}(\Gamma)}}{\| f_{\rm Dir} \|_{L^{2}(\Gamma)}}.
\end{align}
We observe in Figure \ref{Fig:Error} that as the order $N$ of the OSRC$_{N}$ increases, the error decreases. However, at some point the error becomes stagnant. This is not the result of the finite--dimensional approximation obtained by the discrete triangulation. A refinement of the mesh does not diminish the error any further. Nevertheless, we are pleased to see that the higher order OSRC diminishes the error in a systematic manner. In the spherical case, we are able to approximate the solution with an error below $1\%$. For the other three surfaces, the error drops below $2\%$, $4\%$ and $10\%$, respectively.  

\begin{figure}[H]
\centering
\subfloat[Sphere]{\includegraphics[width=0.35\textwidth]{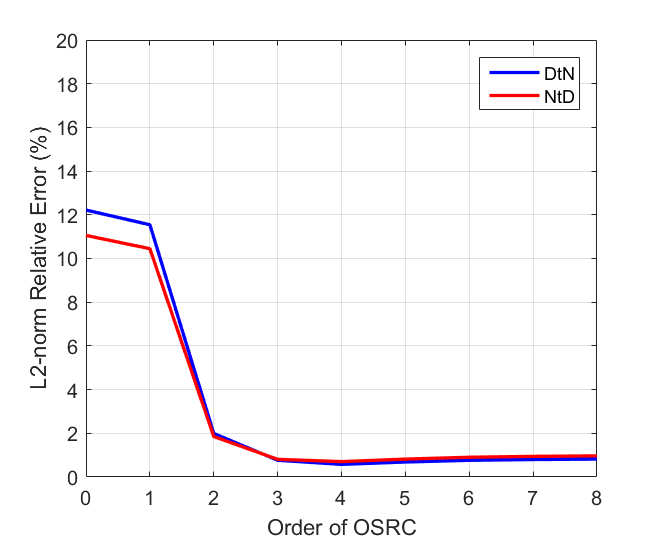}}
\subfloat[Marshmallow]{\includegraphics[width=0.35\textwidth]{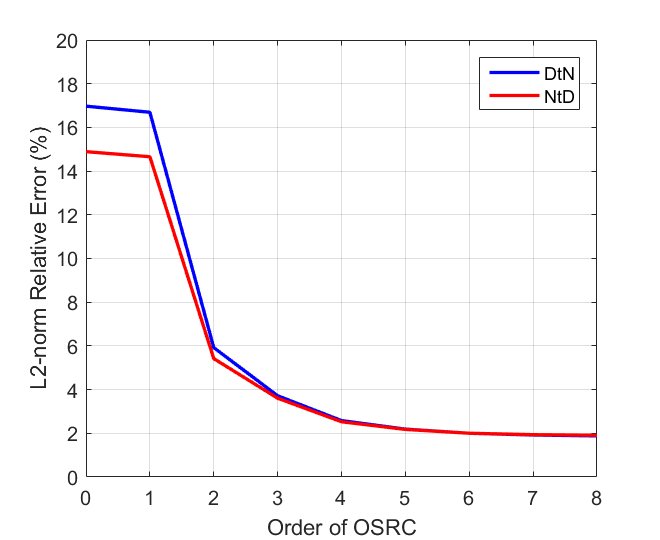}} \\
\subfloat[Squash]{\includegraphics[width=0.35\textwidth]{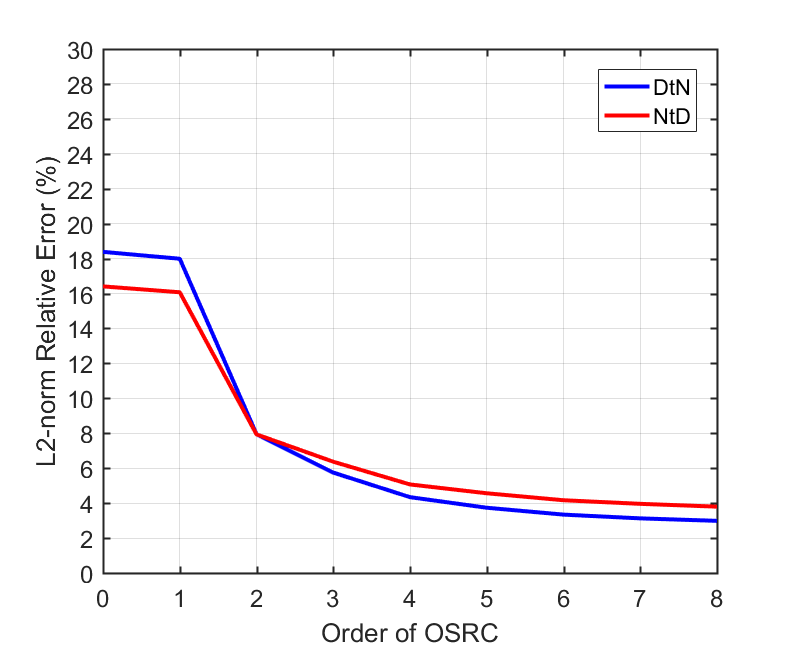}}
\subfloat[Blood cell]{\includegraphics[width=0.35\textwidth]{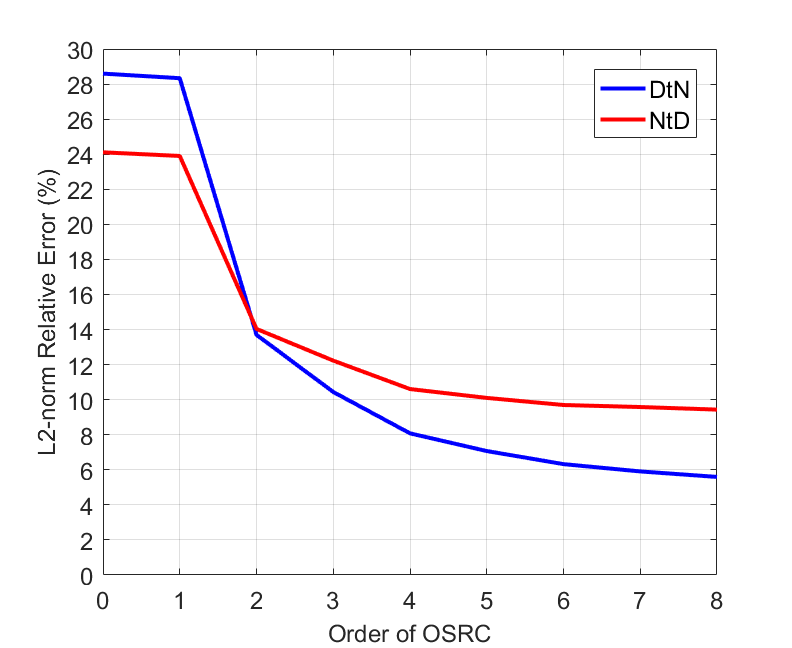}}
\caption{Relative error (\ref{Eqn.Error}) for solving the Dirichlet and Neumann boundary value problems on the sphere, squash, marshmallow, and blood cell surfaces. The wave number is $k = 10 \pi$ and the regularizer is (\ref{Eqn.Regularizer}). The order of the OSRC is defined in Section \ref{Section.SphericalCase}. The number of vertices in each mesh is $M = 10242$ which renders approximately $12$ points per wavelength.}  
\label{Fig:Error}
\end{figure}

As defined by the Algorithm \ref{Alg.DtN}, the application of the NtD--OSRC requires the inversion of the operators $[\lambda_{+}]_{N}$. However, this inversion is not a heavy computational burden because this auxiliary problem is two--dimensional and local in nature. So its discretization leads to medium--size sparse matrices. In our experience, the memory demand is low and Gaussian elimination works very well. Better efficiency can be achieved by using permutation schemes to pre--order the matrices for Gaussian elimination. The sparsity patterns of the matrices representing the discretization of $[\lambda^{+}]_{N}$ for $N = 2,4,6$ are displayed in Figure \ref{Fig:Sparsity}. The top three panels display the raw patterns induced by the connectivity of the triangular mesh. Due to the nature of the recursive formula (\ref{Eqn.DtN0}), the pattern of $[\lambda^{+}]_{2N+1}$ is the same as the pattern of $[\lambda^{+}]_{2N}$. The appearance of the Laplace--Beltrami operator $\Delta_{\Gamma}$ in (\ref{Eqn.DtN0c}) causes the matrix to lose sparsity as $N$ increases. Therefore, we computed the reverse Cuthill--McKee permutation (built in MATLAB) of the matrix representing the Laplace--Beltrami operator $\Delta_{\Gamma}$, and applied the same permutation to $[\lambda^{+}]_{N}$. The sparsity patterns for the re--ordered matrices are shown in the bottom panels of Figure \ref{Fig:Sparsity}. Notice the considerable improvement in reducing the bandwidth of these matrices. Significant improvements in computational efficiency can be expected when using Gaussian elimination to apply the NtD--OSRC$_{N}$ for fine meshes. 

\begin{figure}[H]
\centering
\subfloat[N=2]{\includegraphics[width=0.32\textwidth]{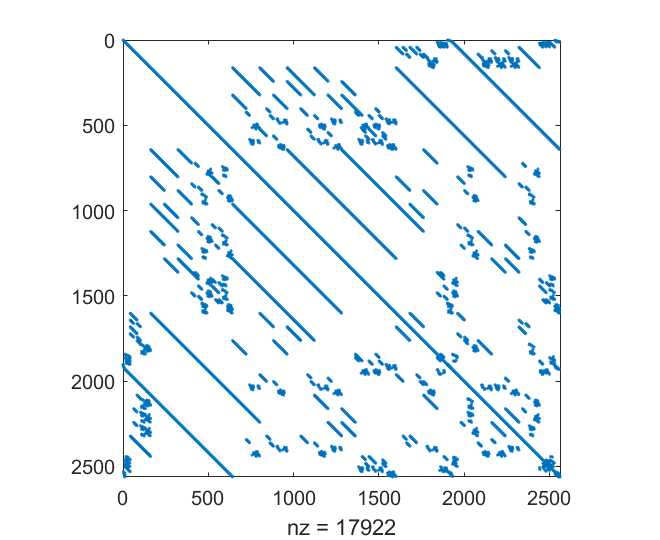}}
\subfloat[N=4]{\includegraphics[width=0.32\textwidth]{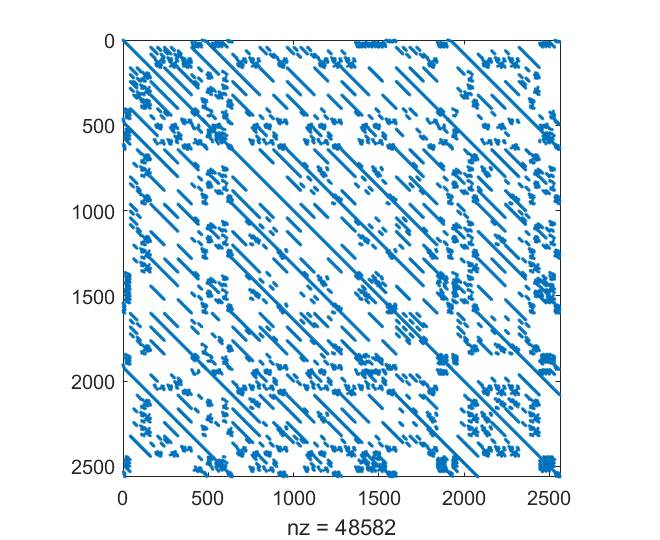}}
\subfloat[N=6]{\includegraphics[width=0.32\textwidth]{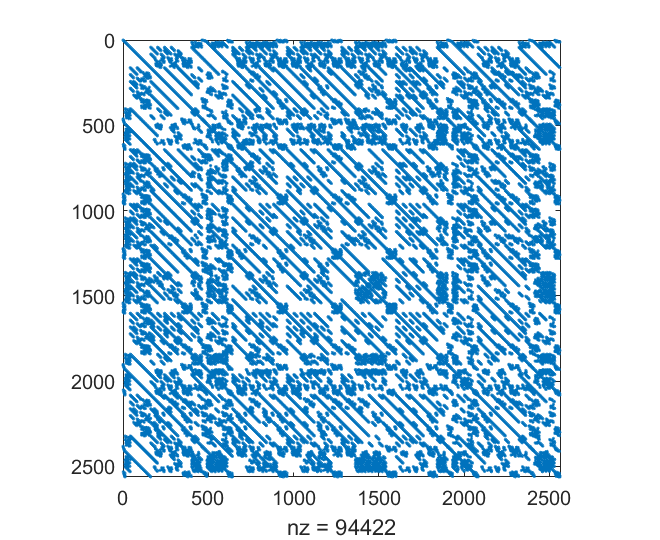}} \\
\subfloat[N=2]{\includegraphics[width=0.32\textwidth]{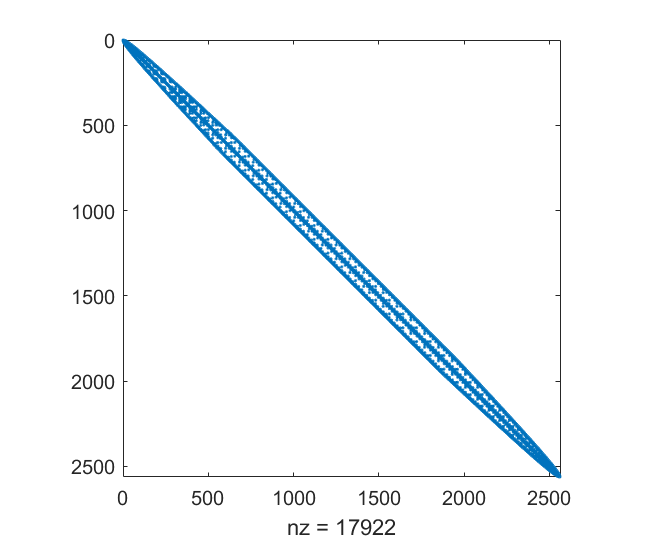}}
\subfloat[N=4]{\includegraphics[width=0.32\textwidth]{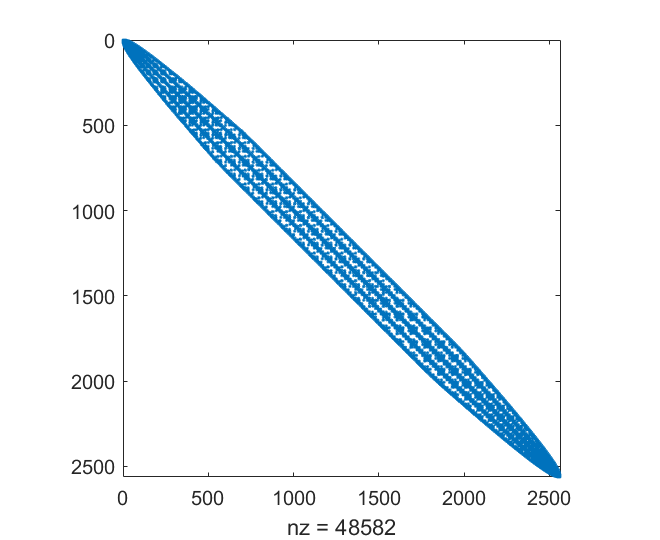}}
\subfloat[N=6]{\includegraphics[width=0.32\textwidth]{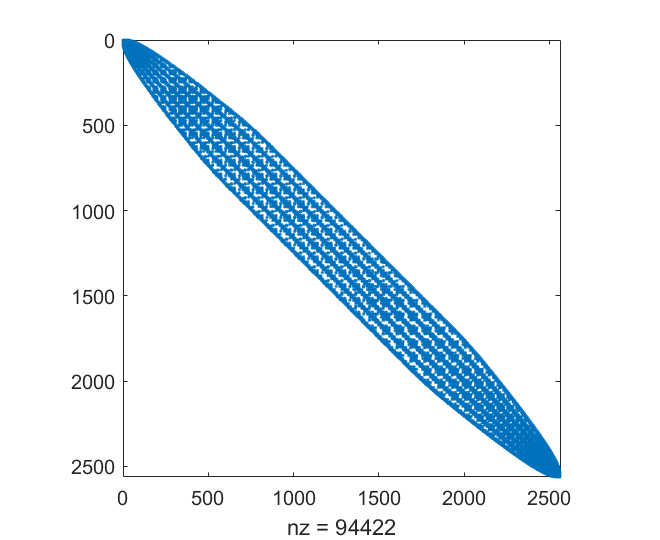}}
\caption{Sparsity patterns for the discrete version of $[\lambda^{+}]_{N}$ for $N = 2,4,6$, and the corresponding results using the reversed Cuthill--McKee permutation for the Laplace--Beltrami operator $\Delta_{\Gamma}$.}  
\label{Fig:Sparsity}
\end{figure}

%%%%%%%%%%%%%%%%%%%%%%%%%%%%%%%%%%%%%%%%%%%%%
%%%%%%%%%%% NEW SECTION %%%%%%%%%%%%%%%%%%%%%
%%%%%%%%%%%%%%%%%%%%%%%%%%%%%%%%%%%%%%%%%%%%%
\section{Far--field Pattern} \label{Section.FFP}
Now we illustrate the ability of the proposed high order OSRC to approximate the far--field pattern radiated from the surface $\Gamma$. The far--field pattern $u_{\infty}$ of the solution to the boundary value problem (\ref{Eqn.Main}) can be obtained from the Green's formula (\ref{Eqn.Green}) and the following asymptotic behavior of the fundamental solution
\begin{subequations} \label{Eqn.FFG}
\begin{align} 
\Phi(\xx,\yy) &= \frac{e^{ik|\xx|}}{4 \pi |\xx|} \left( e^{-ik \hat{\xx} \cdot \yy} + \mathcal{O}(|\xx|^{-1}) \right), \\
\frac{\partial \Phi(\xx,\yy)}{\partial \nu(\yy)} &= \frac{e^{ik|\xx|}}{4 \pi |\xx|} \left(  -ik \hat{\xx} \cdot \nu(\yy)   e^{-ik \hat{\xx} \cdot \yy} + \mathcal{O}(|\xx|^{-1})  \right).
\end{align}
\end{subequations}
We get
\begin{align} \label{Eqn.FFP}
u_{\infty}(\hat{\xx}) = - \int_{\Gamma} \left( ik \hat{\xx} \cdot \nu(\yy) u(\yy)  + \partial_{\nu} u(\yy)  \right) e^{-ik \hat{\xx} \cdot \yy} dS(\yy), \qquad |\hat{\xx}| = 1.
\end{align}

For the Dirichlet problem, $u$ is prescribed and $\partial_{\nu} u$ is approximated using the DtN--OSRC$_{N}$ operator. For the Neumann problem, $\partial_{\nu} u$ is prescribed and $u$ approximated using the NtD--OSRC$_{N}$ operator. Then the computation of the far--field pattern reduces to numerical quadrature to approximate the integral in (\ref{Eqn.FFP}). The behavior of the error in approximating the far--field pattern with the proposed method is displayed in Figure \ref{Fig:ErrorFFP} for increasing order $N$. This relative error is measured in the $L^2$--norm on the unit sphere (domain of the far--field pattern) for both the DtN--OSRC$_{N}$ and NtD--OSRC$_{N}$ methods for the same surfaces and parameters discussed in Section \ref{Section.Results} and Figure \ref{Fig:Error}.

\begin{figure}[H]
\centering
\subfloat[Sphere]{\includegraphics[width=0.35\textwidth]{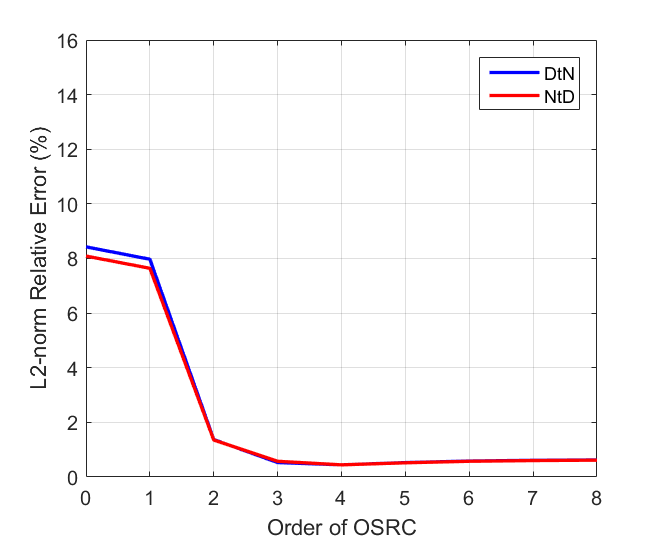}}
\subfloat[Marshmallow]{\includegraphics[width=0.35\textwidth]{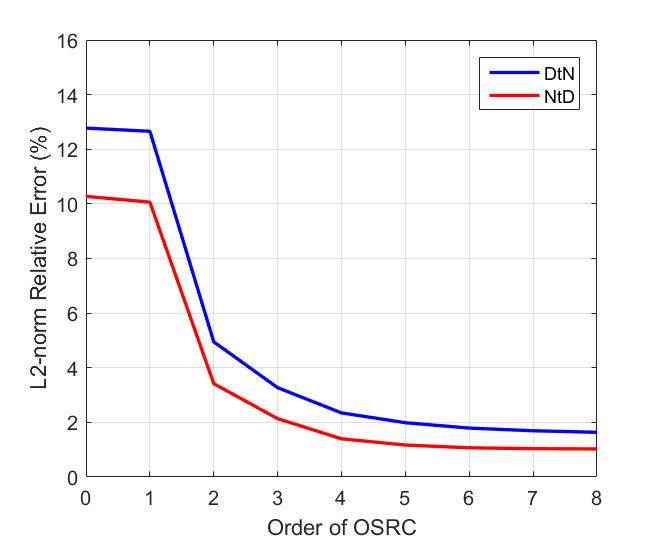}} \\
\subfloat[Squash]{\includegraphics[width=0.35\textwidth]{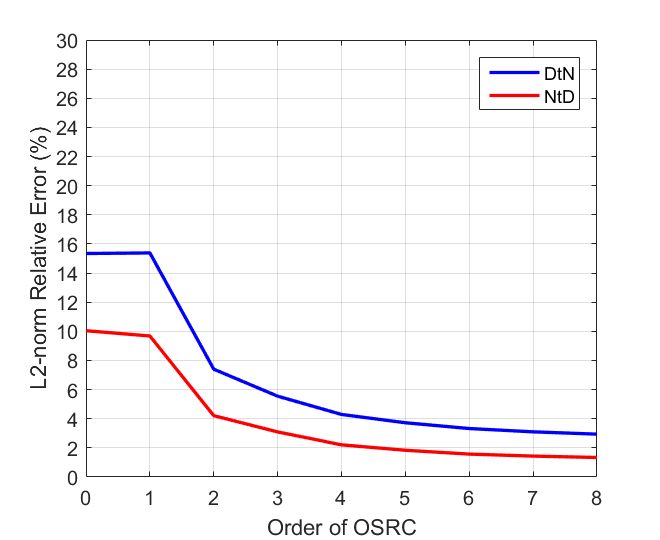}}
\subfloat[Blood cell]{\includegraphics[width=0.35\textwidth]{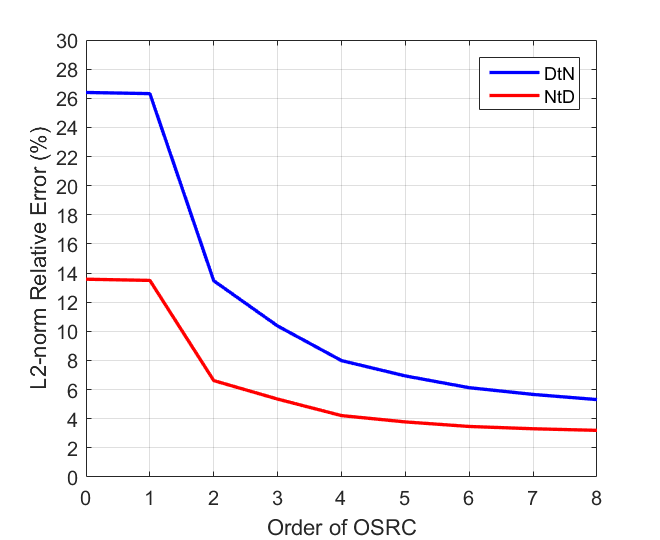}}
\caption{Relative error in approximating the far--field pattern (\ref{Eqn.FFP}) for solving the Dirichlet and Neumann boundary value problems on the sphere, squash, marshmallow, and blood cell surfaces. The wave number is $k = 10 \pi$ and the regularizer is (\ref{Eqn.Regularizer}). The order of the OSRC is defined in Section \ref{Section.SphericalCase}. The number of vertices in each mesh is $M = 10242$ which renders approximately $12$ points per wavelength.}  
\label{Fig:ErrorFFP}
\end{figure}

Figures \ref{Fig:FFP10pi} and \ref{Fig:FFP20pi} display the far--field patterns for $k=10\pi$ and $k=20\pi$, respectively, for the squash--like surface. A comparison is shown between the exact and approximate solution using the DtN--OSRC$_{8}$. The Dirichlet data is prescribed as in Section \ref{Section.Results}. The presence of the sources inside $\Gamma$ produces an interesting interference pattern that the OSRC is able to capture quite accurately. 

\begin{figure}[H]
\centering
\includegraphics[width=0.95\textwidth, trim={50 10 50 10}]{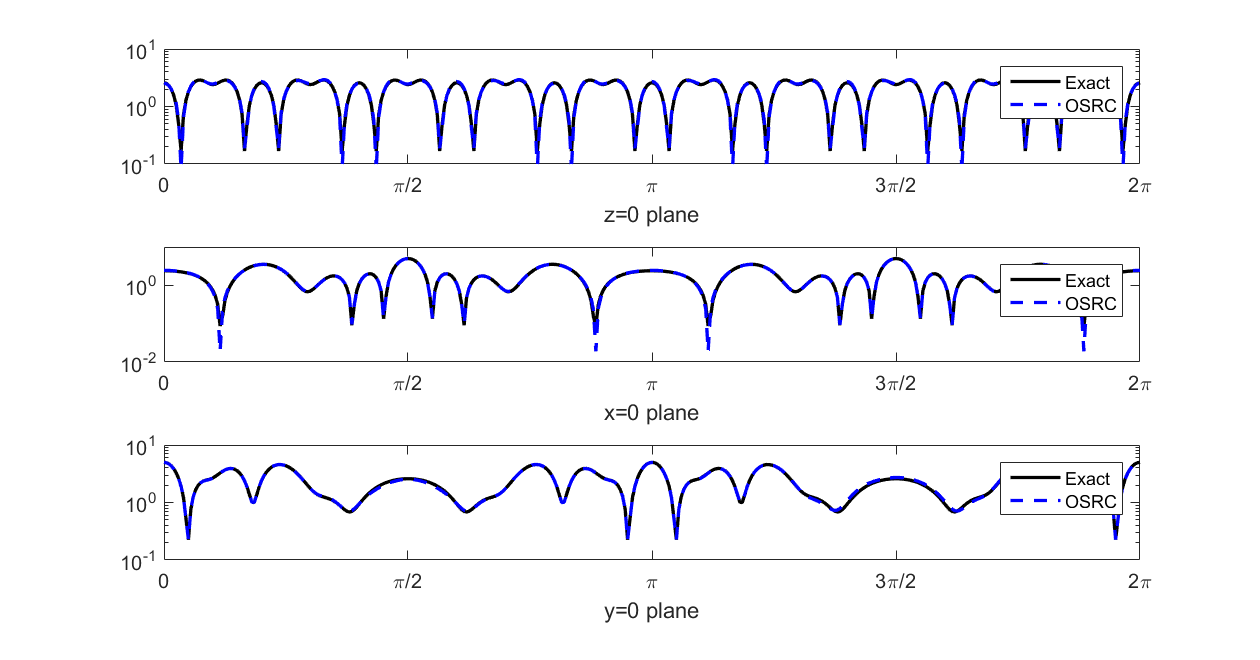}
\caption{Absolute value of the far--field pattern (\ref{Eqn.FFP}) for the Dirichlet problem on the squash--like surface. The exact boundary data is $u = F|_{\Gamma}$ and $\partial_{\nu} u = \partial_{\nu} F$, and $F$ is given as in Section \ref{Section.Results}. For the DtN--OSRC approximation, in (\ref{Eqn.FFP}) we replace $\partial_{\nu} u$ with $[\lambda^{+}]_{8} f_{\rm Dir}$ which is computed according to Algorithm \ref{Alg.DtN}. The wavenumber $k=10 \pi$. The $L^2$--norm relative error for the far--field pattern is $2.94 \%$.}  
\label{Fig:FFP10pi}
\end{figure}

\begin{figure}[H]
\centering
\includegraphics[width=0.95\textwidth, trim={50 10 50 10}]{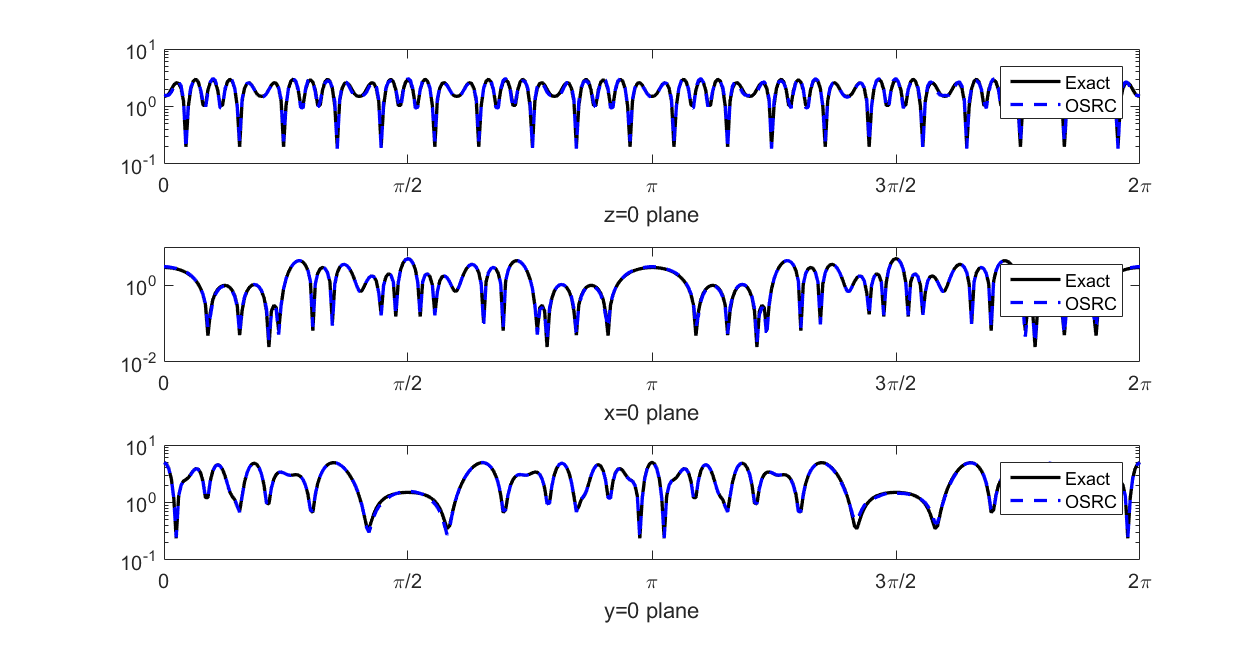}
\caption{Absolute value of the far--field pattern (\ref{Eqn.FFP}) for the Dirichlet problem on the squash--like surface. The exact boundary data is $u = F|_{\Gamma}$ and $\partial_{\nu} u = \partial_{\nu} F$, and $F$ is given as in Section \ref{Section.Results}. For the DtN--OSRC approximation, in (\ref{Eqn.FFP}) we replace $\partial_{\nu} u$ with $[\lambda^{+}]_{8} f_{\rm Dir}$ which is computed according to Algorithm \ref{Alg.DtN}. The wavenumber $k=20 \pi$. The $L^2$--norm relative error for the far--field pattern is $2.35 \%$.}  
\label{Fig:FFP20pi}
\end{figure}

%%%%%%%%%%%%%%%%%%%%%%%%%%%%%%%%%%%%%%%%%%%%%
%%%%%%%%%%% NEW SECTION %%%%%%%%%%%%%%%%%%%%%
%%%%%%%%%%%%%%%%%%%%%%%%%%%%%%%%%%%%%%%%%%%%%
\section{Conclusions} \label{Section.Conlusion}

We have constructed an on--surface radiation condition (OSRC$_N$) of arbitrarily high order $N$. This condition is applicable on smooth convex surfaces of arbitrary shape. The high order is obtained through a recursive formula for higher order pseudo--differential symbols of the outgoing Dirichlet--to--Neumann (DtN) map. From the numerical experiments, we observe a systematic reduction of the error as the order $N$ increases which is the main achievement of this paper. This is a rather impressive property of the DtN--OSRC$_{N}$, considering that we are applying local, two--dimensional, explicit operations to solve a three--dimensional elliptic boundary value problem.

The major limitation of our proposed high order OSRC is caused by the formal generalization  of the recursive formula (\ref{Eqn.DtN0d}) from the spherical case to other geometries. See Section \ref{Section.SphericalCase} for details. The symbols (\ref{Eqn.DtN0a})--(\ref{Eqn.DtN0c}) are valid for arbitrary convex geometry. But the recursion (\ref{Eqn.DtN0d}) is not. However, we suspect that some of the correct geometric information from (\ref{Eqn.DtN0c}) gets propagated into the higher symbols through (\ref{Eqn.DtN0d}). 
The other limitations are similar to those of previous formulations of OSRC. In general, the OSRC provides a relatively crude approximation of the solution lacking a--priori error estimates. The microlocal analysis employed to define the geometric OSRC does not imply that the DtN--OSRC$_{N}$ converges to the exact DtN operator as $N$ increases for fixed $k$. The only notion of convergence is obtained as $k \to \infty$. And even in that case, the convergence in the microlocal sense allows for a mismatch in the form of a smoothing operator. This may explain the stagnation of the error observed in Figure \ref{Fig:Error}.

Finally, we propose several directions of research that may improve or extend our work:
(i) A generalization of Lemma \ref{Lemma.01} to ellipsoidal geometry may provide more accurate results including the explicit appearance of the principal curvatures in the recursive formula. (ii) We expect the general ideas of this paper to apply to electromagnetic waves; an area where many engineering application reside. (iii) In this paper, we are not particularly concerned with the error associated with the triangulation of the surface $\Gamma$ and the discrete approximation of geometric properties. Variational and finite element formulations compatible with geometric structures may offer great improvements \cite{Arnold2006}. (iv) Following the approaches from \cite{Acosta2015f,Alzubaidi2016}, problems with multiple disjoint obstacles may also be addressed.

\section*{References}

%% `Elsevier LaTeX' style
\bibliographystyle{elsarticle-num}
\bibliography{Waves}

\end{document}